\documentclass[runningheads]{llncs}

\usepackage{graphicx}



\title{Visibility Extension via Reflective Edges to an Exact Quantity}
\author{Arash Vaezi \thanks{Department of Computer Engineering, Sharif University of
Technology, {\tt avaezi@ce.sharif.edu}}
\and
Mohammad Ghodsi\thanks{Department of Computer Engineering, Sharif University of
Technology, and Institute for Research in Fundamental Sciences
(IPM), Tehran, Iran. {\tt ghodsi@sharif.edu}. This author's research
was partially supported by the IPM under grant No: CS1392-2-01}}



\def\seg#1{\overline{#1}}
\def\P{\cal P}
\begin{document}
\thispagestyle{empty}
\maketitle
\begin{abstract}
We consider extending the visibility polygon $(VP)$ of a given point $q$ $(VP(q))$, inside a simple polygon $\P$ by converting some edges of $\P$ to mirrors. We will show that several variations of the problem of finding mirror-edges to add precisely $k$ units of area to $VP(q)$ are NP-complete. The optimal cases are NP-hard. We are unaware of any result on adding an \emph{exact} number to a polygon, or covering an area with an \emph{exact} surface. We deal with both single and multiple reflecting mirrors for both specular or diffuse types of reflections.

\end{abstract}

\section{Introduction}
The visibility problem has undergone an extensive analysis in the literature. Linear-time algorithms have been proposed to obtain the visibility polygon(region) of a point $q$ within a simple polygon $VP(q)$~\cite{1}, or when the viewer is a segment \cite{2}.

If some of the edges of $\P$ are made into mirrors, then each $VP(q)$ may enlarge.
Visibility in the presence of mirrors was first introduced by Klee in 1969 \cite{3}. He asked whether every polygon whose edges are all mirrors is illuminable from every interior point. In 1995 Tokarsky constructed an all-mirror polygon inside which there exist a dark point~\cite{4}.

It was shown in 2010 that $VP$ of a given point or segment can be computed in the presence of a mirror in $O(n)$ time~\cite{6}, with respect to the complexity of $\P$.

Visibility with mirrors subject to different types of reflection also has been studied before \cite{5}. There are two reflection effects for the mirror-edges that we consider in this paper, diffuse-reflection to reflect light with all possible angles from a given surface and, the specular reflection which is the mirror-like reflection of light from a surface. In specular reflection, a single incoming direction is reflected into a single outgoing direction. Some have also specified the maximum number of allowed reflections via mirrors in between \cite{ad}.

\subsection{Our Results}
We initiate the study of an optimization problem in a setting related to the art gallery problem. In this setting, the polygon edges can be converted to mirrors(reflectors). A guard can see a point if it is directly visible to it or if it is mirror-visible via one or more reflections. This is a natural and non-trivial extension of the classical art gallery setting. This setting has been considered before in few papers \cite{walcom,euro,ad,5}. The problem considered in this paper is to find a minimum number of edges that can be converted to mirrors so that the visibility region of a given point gets expanded by an area of $k$. Depending on the reflection type (specular or diffuse) and number (single or multiple) one can consider four cases. In this paper, we have proved NP-completeness of the problem for all the cases except the specular multiple reflection case.

There are many variations of this theme.  Reflections from mirrored edges may be required to obey Snell’s law ("specular mirrors"), or they may simply connect any pair of rays incident to the same point on the mirror ("diffuse mirrors").  Rays may be permitted to bounce a limited number of times or travel through a limited number of edges. The viewpoint $q$ may become a segment inside $\P$ \cite{walcom}.

This article considers the following problem: given a polygon $\P$, a point $q$ inside $\P$, and positive integer $k$, does there exist a subset of edges of $\P$ that can be turned into mirrors so that $VP(q)$ increases in an area by exactly $k$ units?  Two problems are considered, one with specular mirrors and one with diffuse mirrors.

We present polynomial-time reductions from the NP-complete Subset-Sum problem to our problem and show that each solution for an arbitrary instance of the problem is verifiable for correctness in polynomial time. Thus, the primary NP-completeness results follow. We are working on another paper regarding optimization problems with monotone properties \cite{atleast}.

We dealt with another version of the problem of extending visibility polygons in another paper \cite{euro} and we improved our results in \cite{walcom}. In those papers, we had a given object (a point or a segment) as a target inside a simple polygon $\P$. This target is not visible to the given viewer (a point). In order to make the target mirror-visible to the viewer, we converted some edges of $\P$ to mirrors. Different types of segment visibility were considered. We proposed a linear time algorithm which finds all mirror-edges that make the target mirror-visible to the viewer, and reveals precisely which part of the target is mirror-visible through each mirror-edge.

In this paper, we do not have a target. To add an exact amount to $VP$ of a given viewer, the problem is more intricate and tricky. 

\section{Applications}
The art gallery problem originates from a real-world problem of guarding an art gallery with the minimum number of guards who together can observe the whole gallery. However, in some cases, it is not possible to use guards just anywhere in the gallery. There might be security reasons, or the building might be too old to install many guards where ever we need.

Consider a market where the owner cannot afford many cameras to guard the whole place. A simple way is to buy a high-resolution camera and to install it somewhere safe and accessible. And then, the owner can install some mirrors around the market. The minimum number of mirrors that can guard the whole place will minimize the overall cost. In this example, in order to have a better video quality, it is better to allow only one reflection for the mirrors. To keep the beauty of the place the mirrors may be installed on the wall of the store. The owner needs to find all the walls that a mirror should be installed on them, the location of each mirror on its corresponding wall, and the exact size of the mirror. 

To present an ideal example, consider a subway or a studio under the ground, assume that the telecommunication signals can only be received from an outside source by an antenna in the entry of the subway. The telecommunication signals should be transmitted from one place to another started from the entry. Some reflectors should amplify these signals and reflect them in various directions to ensure that the whole place is covered successfully. So, one can be sure that his/her cell phone works under the ground effectively. These reflectors are not ordinarily capable of reflecting the signals in all directions. 

A base transceiver station (BTS) is a piece of equipment that facilitates wireless communication between user equipment (UE) and a network. UEs are devices like mobile phones, computers with wireless internet connectivity. Though the term BTS can apply to any of the wireless communication standards, it is generally associated with mobile communication technologies. An antenna is a structure that the BTS lies underneath; it can be installed as it is or disguised in some way. The signal quality of an antenna is much better in one specific direction around the antenna. The minimum number of these antennas and their places is extremely important.

Here, we simplify the problem. There is a simple polygon $\P$, and a viewer $q$. The viewer has a visibility polygon and is capable of seeing in all directions. Also, there is no distance limitation for the viewer to see anything. We intend to find the minimum number of reflectors. These reflectors have no distance limitation too. They can reflect signals (or lights) coming from the viewer in a specific direction considering the specular type of reflection, or in all directions considering the diffuse type of reflection. These reflectors are installed on the edges of the polygon, and we assumed that no mirror can be inside the polygon. In fact, we can assume that we convert some part of an edge to a reflector. The problem is to find the place of the minimum number of reflectors to cover either the whole polygon or some pre-specified regions inside $\P$.

\section{Notations}
Suppose $\P$ is a simple polygon where $int(\P)$ denotes its interior. Two points $x$ and $y$ are visible to each other, if and only if the relatively open line segment $\overline{xy}$ lies completely in $int(\P)$.
The visibility polygon of a point $q$ in $\P$ denoted as $VP(q)$, consists of all points of $\P$ visible to $q$.

Every edge of $\P$ has the potential of converting into a mirror. We can assume that all edges are mirrors. However, the viewer can only see some edges of $\P$. From now on, when we talk about an edge, and we need to consider it as a mirror in order to compute or check something (such as its mirror-visibility area), we call it \emph{mirror-edge}. Two points $x$ and $y$ inside $\P$ can see each other through $e$, if and only if they are directly visible with a kind of reflection. 

Since only an interval of a mirror-edge is useful, we can consider the whole edge as a mirror, and there is no need to split an edge. Since $\P$ is a simple polygon, the viewer inside $\P$ can only see a contiguous (possibly empty) portion of any edge in $\P$. In other words, only one contiguous part of every edge of $\P$ is visible for the viewer. We need to find this part, and also make sure that it is visible to the target segment too. 

Two points or segments are "mirror-visible" if and only if they can see each other through a mirror-edge. 

In our reductions we will use the variation of the \emph{Subset-Sum problem} in which the target($T$) and all the values are non-negative, this variation is NP-Complete due to \cite{CLRS}.

\section{Expanding \emph{exactly} $k$ units}
We begin this section by the following theorem which is in fact the main contribution of this paper. 
\begin{theorem}
Given a simple polygon $\P$, $q \in \P$, and an integer $k >0$,
the problem of choosing $m$ mirror-edges of $\P$ in order to expand $VP(q)$ \emph{exactly} $k$ units of area is NP-complete in the following cases:

1. Specular-reflection type regarding single reflections.
 
2. Diffuse-reflection type regarding either single or finite-multiple reflections.

\end{theorem}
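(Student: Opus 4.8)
The plan is to establish both directions of NP-completeness through the route announced in the introduction: membership in NP via polynomial-time verification of a candidate set of mirror-edges, and NP-hardness via a polynomial reduction from the non-negative \emph{Subset-Sum} problem. The heart of the argument will be a single geometric gadget construction that encodes the numeric values of a Subset-Sum instance as disjoint, independently-controllable pieces of area, each of which a distinct mirror-edge can contribute to $\VP(q)$.

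For membership in NP, I would take as certificate the chosen subset of $m$ mirror-edges. To verify it, first compute $\VP(q)$ in linear time \cite{1}, then for each selected edge compute the region that becomes mirror-visible through it; by the result that visibility in the presence of a mirror is computable in $O(n)$ time \cite{6}, each such region has polynomial descriptive complexity. Forming the union of $\VP(q)$ with these regions yields an arrangement of polynomially many segments (and, in the specular case, their reflected images), whose total area can be evaluated and compared against $k$ in polynomial time. The construction will be arranged so that all coordinates, and hence all resulting areas, are rational, so that the equality test against $k$ is exact.

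For the hardness reduction, given a Subset-Sum instance with non-negative values $a_1,\dots,a_n$ and target $T$, I would build a simple polygon $\P$ with a viewpoint $q$ in a central corridor, together with $n$ candidate mirror-edges $e_1,\dots,e_n$. Each $e_i$ guards a thin pocket $R_i$ that is hidden from $q$ by reflex vertices while $e_i$ is an ordinary wall, but that becomes entirely revealed the moment $e_i$ is turned into a mirror; the pocket $R_i$ is shaped so that its area equals exactly $a_i$. Crucially, the pockets are placed so that the $R_i$ are pairwise disjoint and disjoint from the fixed region $\VP(q)$, so that activating a set $S$ of mirrors enlarges the visible area by exactly $\sum_{i \in S} a_i$. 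Setting $k = T$ then makes an exact-$k$ expansion possible if and only if some subset of the $a_i$ sums to $T$, and the whole polygon has size polynomial in $n$ and in the bit-lengths of the $a_i$, so the reduction runs in polynomial time.

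I expect the main obstacle to be exactly the \emph{disjointness and exact-area} requirements of the gadget, and making them robust across the reflection models. For the specular single-reflection case the region revealed by $e_i$ is $\VP(q')$ clipped to the halfplane beyond the mirror, where $q'$ is the reflection of $q$ across the supporting line of $e_i$; I must position $e_i$ and $R_i$ so that this clipped region coincides with $R_i$ and touches no part of any $R_j$ with $j \ne i$. For the diffuse case the revealed region is instead the weak visibility region of the entire mirror segment, which is larger and harder to confine, so the pockets must be recessed behind narrow slits so that only $R_i$ can be reached through $e_i$; the same slit design simultaneously handles finite-multiple diffuse reflections, since a bounded number of extra bounces can only stay inside the already-sealed pocket. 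The specular multiple-reflection case is deliberately left open, precisely because iterated specular bounces can couple distinct pockets and destroy the additivity on which the reduction relies.
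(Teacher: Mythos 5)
Your NP-membership argument and your specular single-reflection reduction follow essentially the same route as the paper: disjoint pockets of area $a_i$, each revealed entirely and exclusively by one designated mirror-edge, with rational, polynomially bounded coordinates, so that an exact expansion by $k=T$ exists iff some subset of the $a_i$ sums to $T$. That part is sound and matches the paper's construction in spirit.

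The gap is in the diffuse case. You assume the pockets can be ``recessed behind narrow slits so that only $R_i$ can be reached through $e_i$,'' i.e., that perfect isolation is achievable. The paper argues that in its gadget this is \emph{not} achievable: since the final polygon must be simple and connected, some edge must stand opposite each window (across $q$), and under diffuse reflection any such edge unavoidably makes a small sliver of the region behind the window mirror-visible to $q$ (the $SMV$ area). Consequently the paper does not set $k=T$; it shrinks the slivers so that their total area is some $\epsilon<1$, sets the target to $T+\epsilon$, and arranges the closing $\seg{rd}$-edges so that each one sees exactly one \emph{entire} $SMV$ sliver --- forcing every sliver to be covered (by its main mirror-edge or by the corresponding $\seg{rd}$-edge) in any exact solution, while the integer part of the added area still encodes the chosen subset. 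Without some such accounting device your reduction is not obviously correct: partial slivers leaking through the closing edges could perturb the achievable areas away from clean subset sums. Your handling of finite-multiple diffuse reflections is similarly too optimistic --- extra bounces do not merely ``stay inside the already-sealed pocket''; the paper exhibits a concrete two-bounce path ($q$ to an $\seg{rd}$-edge, then to a second mirror-edge $\seg{cx}$ inside the opposite gadget, then into the pocket) and must argue separately that any such path reveals the same full quadrangular region, so the subset correspondence survives.
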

Considering the assumption that the above theorem is true, 
on the subject of the minimum $m$, all of the cases mentioned above are NP-hard (The optimal case of the problem is NP-hard).

By checking each edge mirror-visibility, in polynomial time, a given solution can easily be verified if it adds precisely $k$ units to $VP(q)$. Therefore, the problem is in NP. According to \cite{6} we can check each edge mirror-visibility in linear time, and we can compute the area of the union of all mirror-visibility areas in polynomial time.

Note that, the overall structure of the polygons we construct in our reductions can be created in a way that all the coordinates become a rational number. That is because we can easily re-scale the polygon so that all the angles remain constant, or also, we can shift the gadgets in the polygons without any damage to the construction of the reduction.

We will show that the NP-complete Subset-Sum problem is reducible to this problem in polynomial time. Thus, we can deduce that our problem in the cases mentioned above is NP-complete. 

We will discuss the details of our reductions in the following subsections. 

\subsection{Specular type of reflections}

Consider an instance of a Subset-Sum problem ($InSS$), which has $v_{1} ,v_{2} ... ,v_{n}$ non-negative integer values, and a target number $T$. The Subset-Sum problem is to look for a subset of these values, which their summation equals to $T$.

The main contribution of the reduction is the structure of the polygon, which provides some features for the polygon.  We use a simple polygon (similar to the one in Fig.~\ref{fig.1}) where there is only one edge of the constructed polygon
that allows $q$ to completely see one specific quadrangular spike opposite to its bottom edge. Note that every edge of the polygon has the potential of getting converted to a mirror-edge but for every spike in the polygon (illustrated in \ref{fig.1}) there is only one edge which can make that spike mirror-visible. In this Subsection, we only consider the specular type of reflection. And, spikes and their corresponding mirror-edges are in the same order from right to left. Moreover, no other edge of $\P$ can see even a $\epsilon$ area of a spike except for its corresponding mirror-edge in the order from right to left.  Each area of those spikes equals to a value in $InSS$. 

For obtaining exactly $k$ units of additional area, some mirror-edges facing their corresponding quadrangular spikes need to be selected and converted to mirrors. The set corresponding to quadrangular spikes seen by a set of selected opposing edges to be converted to mirrors in order to add exactly $k$ units of area to the visibility polygon of $q$ will correspond to a solution for the specified instance of the Subset-Sum problem, and the reduction is complete.

In this reduction, each quadrangular spike is totally and only visible through one mirror-edge, which is why this reduction and construction does not work in the case of multiple reflections (Two rays demonstrated in \ref{fig.1} within the polygon reveals that this reduction does not work in the case of multiple reflections). Even if tiny little units of an area from different spikes were partially visible through some other potential mirror-edges (except for their corresponding mirror-edges), their summation could be a large number and the reduction might fail. Please see appendix for more details.

To be more precise on the structure of the polygon $\P$, see Fig.~\ref{fig.1.seperated}, the reduction polygon consists of two main components. The first component illustrated in Fig.~\ref{fig.1.seperated}(A) is a vertical rectangular. $b$, $c$, $d$ and $e$ are the name of four vertices of this rectangular. Two other vertices of this part of the polygon are $a$ and $f$, which these vertices are on the unique line which contains segment $\seg{be}$. Between $a$ and $f$, there is no line.

We choose integer coordinates for all the mentioned above vertices of $\P$.

 If we imagine a line parallel to the segment $\seg{ed}$, which starts from $f$ inside the polygon, $q$ (the viewer) should be on this line. The more $q$ is close to $f$ the smaller the reduction polygon will be. However, the position of $q$ on this imaginary segment has a considerable effect on the coordinates of the vertices of the other component of the polygon. 

Imagine another segment which crosses $q$ and parallel to $\seg{cd}$. The two imaginary segments, parallel to the edges of the rectangular, create four sub-rectangular spaces in the polygon. Fig.~\ref{fig.1.seperated}(A) reveals how the rays emitted from $q$, via one specular reflection, can be reflected in each of these sub-rectangular spaces. The main rectangular provides such kind of space that $q$ cannot see anything via one specular reflection except on the left side of $\seg{af}$ and via upward rays. 

Fig.~\ref{fig.1.seperated}(B) illustrates the other component of the reduction polygon. This component has $3n$ vertices on its top and $4n$ vertices on its bottom ($n$ is the number of the values of $InSS$). 

On the top of the polygon, there are $n$ mirror-edges. Each mirror-edge has the potential of adding an area to the visibility polygon of $q$. These mirror-edges are numbered from right to left. The vertices of every mirror-edge are on a unique line which  starts from $v_{1}(e_{1})$ (or $a$), and ends at $v_{2}(e_{n})$. $q$ can see every point on this line. So, we select $n$ integers to be the coordinates of the mirror-edges.

Every segment $\seg{v_{2}(e_{i})v_{3}(e_{i})}$ $1 \leq i \leq n$ lies on a ray emitted from $q$. And, every segment $\seg{v_{3}(e_{i})v_{2}(e_{i+1})}$ $1 \leq i \leq (n-1)$ is perpendicular to the next mirror-edge ($\seg{v_{1}(e_{i+1})v_{2}(e_{i+1})}$ $1 \leq i \leq (n-1)$).

On the bottom of the component shown in Fig.~\ref{fig.1.seperated}(B), there is a line starting from $f$ and parallel to the line containing the mirror-edges on the top of this component. The underneath line contains $4n+1$  vertices, and there are $n$ windows. We call this line; the window-line. The windows on the window-line are numbered from right to left. Every window $w_{i}$ corresponds to the mirror-edge $e_{i}$, $1 \leq i \leq n$. The ray reflected from $e_{i}$ on $v_{1}(e_{i})$ reaches the window-line on a point denoted as $v'_{1}(e_{i})$. And, the ray reflected on $v_{2}(e_{i})$ intersects the window-line on a point denoted as $v'_{2}(e_{i})$ $1 \leq i \leq n$. The window-line starts at $f$ and ends at $v'_{2}(e_{n})$.

Consider a segment $\seg{v'_{1}(e)v'_{2}(e)}$, if the coordinates of either $v'_{1}(e)$ or $v'_{2}(e)$ is not rational, then we lessen the size of the corresponding window and choose the first rational number on that window. For example, if the coordinates of $v'_{1}(e_{1})$ was not a rational number then we move to the left on the window-line and select a point with rational coordinates. Note that as we select integer coordinates for the vertices of every mirror-edge, the size of each window is more than one unit (the space between two consecutive integers). So, there is enough space to move $v'_{1}$ or $v'_{2}$ inside the $\seg{v'_{1}v'_{2}}$ interval on window. Also, there is enough space between two windows, and no mirror-edge can disturb the functionality of another one. That is the region behind each window is only and entirely visible to its corresponding mirror-edge. 

By joining the two aforementioned components, we can construct a simple polygon which is, in fact, the reduction polygon.

See Fig.~\ref{fig.1}, $q$ cannot see any invisible region through $\overline{ab}$, $\overline{bc}$, $\overline{cd}$, $\overline{de}$, and $\overline{ef}$ edges.

\begin{figure}[htb]
\begin{center}
\includegraphics[width=\textwidth]{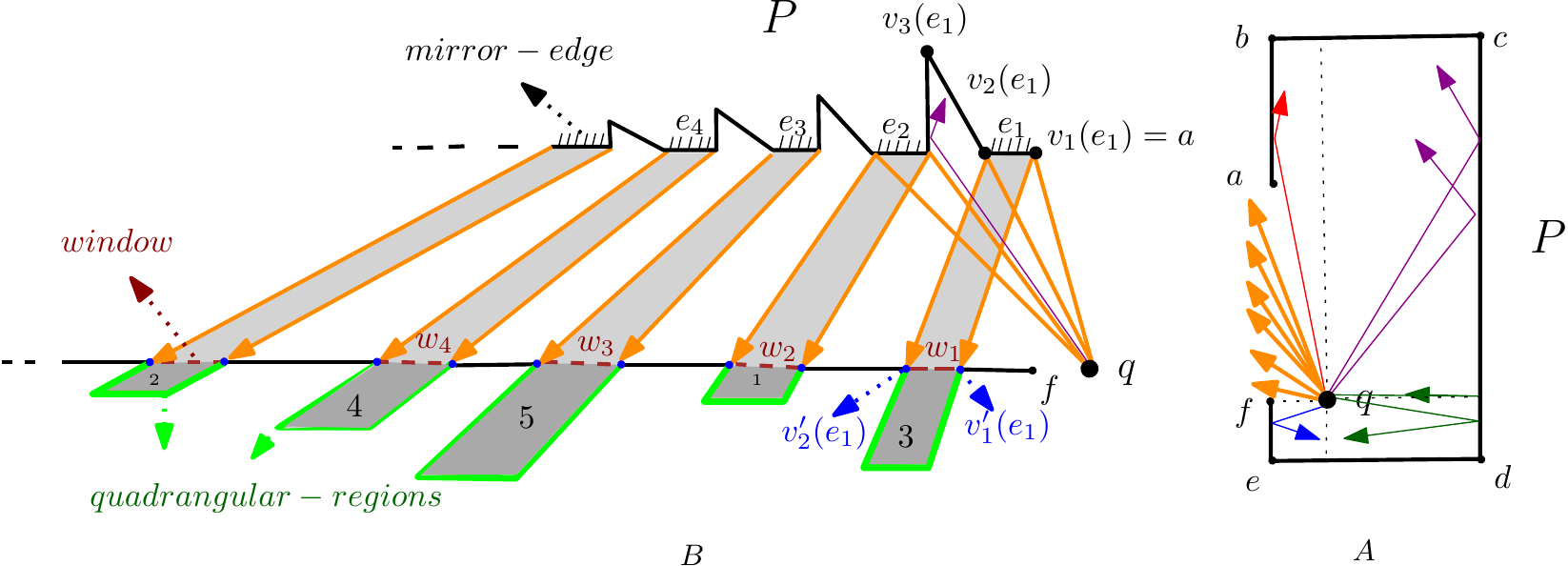}
\caption{Two main components of the reduction polygon is illustrated.} 
\label{fig.1.seperated}
\end{center}
\end{figure}

\begin{figure}[htb]
\begin{center}
\includegraphics[width=\textwidth]{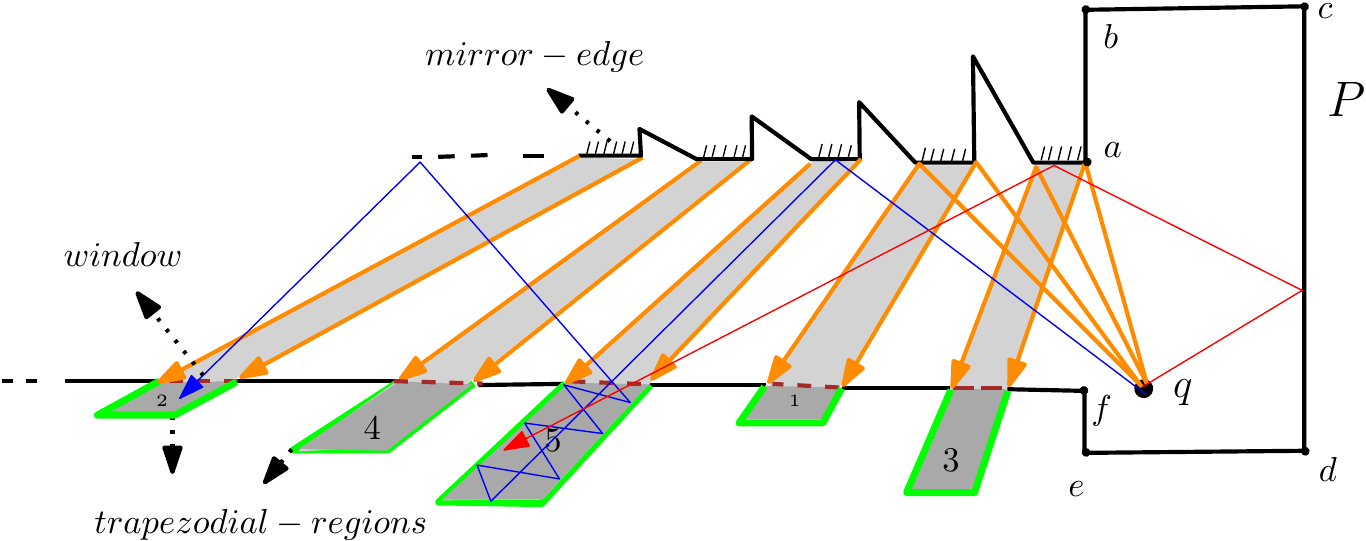}
\caption{Reduction Polygon. There is a quadrangular region behind each window. And the surface area of this region corresponds to a value of $InSS$. This figure contains five first values 3, 1, 5, 4, 2. } 
\label{fig.1}
\end{center}
\end{figure}

Finally, we need to 
put a quadrangular region with an area equal to $v_{i}$ units of area, behind the $ith$ (for $ 1 \leq i \leq n $) window so that $q$ can see that region entirely through the corresponding mirror-edge, and no other edge can make this region visible even a little bit.

To set up the quadrangular region behind a window $w$, after computing the correct positions for the $v'_{1}(e)$ and $v'_{2}(e)$ corresponding vertices, we find the reflected rays form $e$ on $v'_{1}(e)$ and $v'_{2}(e)$.  We need to extend these rays to provide a quadrangular region. Every quadrangular region has four vertices; $v'_{1}$ and $v'_{2}$ and two other on a segment lower than the window-line. This lower segment of each quadrangular region is parallel to the window-line. Since the height of each quadrangular region is flexible, we can set any surface area for that region.

Note that both base and height of a quadrangular region are variable.

Furthermore, the reduction is from Subset-Sum problem, which is weakly NP-hard. Therefore, the integer coordinates should be polynomially bounded in the values and $n$. 

\subsubsection{Multiple Reflections}
The multiple reflection case of this version of the problem is still open.

Fig.~\ref{fig.1.seperated}(B) illustrates a ray reflected from $\seg{v_{3}(e_{1})v_{2}(e_{2})}$, after finite number of reflections this ray cannot get back to the bottom of the polygon.

However, as Figure \ref{fig.1} shows the multiple reflection rays may disturb the functionality of the mirror-edges. Even after two reflections the segment $\seg{cd}$ may cause a mirror-edge to see some area behind another window. Also, this mirror-edge can see the area behind its own corresponding window. 

Considering multiple reflections, on the left side of the polygon the segment $\seg{v_{2}(e_{n})v'_{2}(e_{n})}$ may disarrange the functionality of some mirror-edge, too.

\subsection{Diffuse type of reflections}
This version of the problem is more intricate. And, we cannot use the same polygonal structure used for the previous case. The reason is that in diffuse type of reflections, rays are reflected to the all directions and can be reflected into wrong spikes (a spike which should get mirror-visible via another mirror-edge). Considering multiple plausible reflections the problem gets even harder. These rays have to be excluded by an appropriate arrangement of the polygon's edges. In this subsection, We will present a completely different construction in compare with the construction created in the previous subsection. The construction presented in this subsection works in the case of multiple reflections, too. However, again we reduce the Subset-Sum problem to our problem.

The construction is as the following:

Consider the single reflection first. The reduction polygon consists of some gadgets. We have one gadget for each value in the given instance of the Subset-Sum problem (denoted as $InSS$). So, we need to set $n$ gadgets. Each gadget has a window. And, $q$ (the viewer) can not see behind that window directly. 

Every gadget has a specific edge that can make the area behind the window entirely mirror-visible to $q$ via diffuse type of reflection; we call this edge as the mirror-edge (or later the main-mirror-edge). Other edges of a gadget should not see any area behind the window via a single diffuse type of reflection.

\begin{figure}[htbp]
\begin{center}
\includegraphics[scale=1.3]{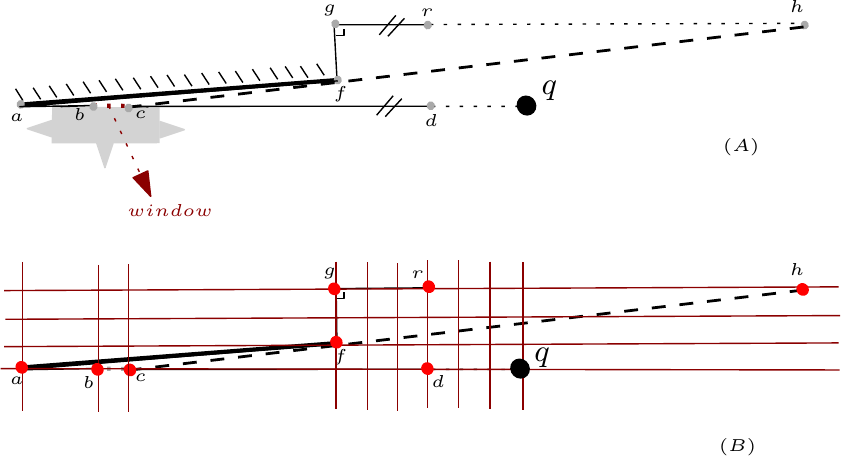}
\caption{Gadget for each value in the given Subset-Sum instance. (A) Illustrates the gadget structure. The gray region reveals the wide mirror-visible area behind the window. This area can become mirror-visible through the mirror-edge via diffuse type of reflection. (B) Shows an example that reveals that the coordinates of the vertices of a gadget can be rational.} 
\label{fig.2}
\end{center}
\end{figure}

See Fig.~\ref{fig.2}.(A). In this gadget, $a$, $b$, $c$, $d$ vertices, and the given point $q$ lie on one unique line. $h$, $f$, and $c$ vertices lie on another line. Also, $h$, $r$, and $g$ vertices lie on a parallel line with $\overline{cd}$. $a$ is connected to $f$. In each gadget, $\overline{af}$ can play the same role as a mirror-edge in the previous reduction presented in the previous subsection. In other words, consider a gadget in the final polygon, and suppose a solution chooses to convert the $\overline{af}$ edge to a mirror. Then, there will be one specific spike (area) which is only and entirely mirror-visible through this $\overline{af}$ edge. And this area should be behind a window segment. 

$\seg{bc}$ is the window in each gadget. The arrangement of other edges of a gadget is in a way that no edge except for $\seg{af}$ can make any area behind $\seg{bc}$ mirror-visible to $q$.

Since $\overline{gr}$ should not make any part of the gray region mirror-visible, $r$ should be on the left side of $h$. This way, $c$ and $f$ vertices block $\overline{gr}$'s mirror-visibility for $q$ not to see the gray region behind the window.

\begin{lemma}
The coordinates of a gadget can be computed in polynomial time in a Turing machine.
\label{lem.1}
\end{lemma}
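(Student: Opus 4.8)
The plan is to enumerate the vertices of the gadget in the order in which the construction determines them, and to verify that each vertex is produced by one of three elementary operations over the rationals: (i) a \emph{free placement} at a chosen rational coordinate; (ii) the \emph{intersection of two lines}, each passing through two already-fixed rational points (or through a fixed rational point in a fixed rational direction); or (iii) the solution of a single \emph{linear area equation} that fixes a remaining free dimension so that the spike attains area exactly $v_i$. Since each of these operations maps rationals to rationals, the whole gadget has rational coordinates; the real content of the lemma is then quantitative, namely that the bit-lengths of these rationals stay polynomially bounded, so that the computation runs in polynomial time on a Turing machine rather than merely on a real-RAM.

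First I would fix the base line carrying $q,a,b,c,d$ to be a rational line and place $q,a,b,c,d$ on it at chosen integer abscissae; these are free placements. The line through $h,f,c$ and the line through $h,r,g$ are each specified by a rational point together with a rational direction (the latter parallel to $\seg{cd}$), so $h$, $f$, $r$, and $g$ are recovered as intersections of such lines, i.e.\ operation (ii). Each such intersection is the solution of a $2\times2$ linear system with rational entries; by Cramer's rule its coordinates are ratios of $2\times2$ determinants of the inputs, hence rational and computable by a constant number of integer multiplications and subtractions. Crucially, in the diffuse setting the boundary of the mirror-visible spike behind the window $\seg{bc}$ is also bounded by sight lines through $q$ and through polygon vertices (the window endpoints $b,c$ and the mirror-edge endpoints $a,f$), so the upper vertices of the spike are again line intersections of exactly the same type. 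No angle-based or distance-based constraint enters, which is precisely why the fallback to a nearby rational point used for the window vertices in the specular construction is \emph{not} needed here.

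It remains to fix the lower edge of the spike, parallel to the window line, whose distance from it is the one free parameter. With the two slanted sides already determined as rational lines, the enclosed area is an affine function of this distance, so setting it equal to the target $v_i$ gives a single linear equation with rational coefficients whose unique solution is rational, i.e.\ operation (iii). Because both the base and the height of the spike are adjustable, this parameter can always be chosen inside the admissible interval, and its bit-length is governed by that of $v_i$ together with the already-fixed coordinates.

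The one place where something could go wrong, and the step I expect to be the main obstacle, is the \emph{bit-complexity}. The gadget has a \emph{constant} number of vertices and is therefore produced by a constant number of operations of types (i)--(iii). By Cramer's rule each line intersection at most doubles the bit-length and adds $O(1)$ bits, and the single area equation does the same; composing a constant number of such steps multiplies the bit-length by at most a constant factor. Hence, starting from the inputs---the integer values $v_1,\dots,v_n$ and the $O(\log n)$-bit abscissae---every coordinate has bit-length polynomial in $n$ and in $\max_i \log v_i$. Since the reduction is from Subset-Sum, which is only weakly NP-hard, a bound polynomial in the \emph{bit-size} of the input is exactly what is required. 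The subtlety to confirm carefully is that the chain of dependent line intersections inside one gadget has length independent of the $v_i$; this holds because the combinatorial description of the gadget is fixed, so no exponential blow-up can accumulate across a value-dependent number of chained rational operations. With that verified, all coordinates are rational and computable in time polynomial in the input size on a Turing machine, proving the lemma.
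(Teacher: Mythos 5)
Your proposal is correct and, if anything, proves more than the paper's own argument does. The paper's proof is an informal degrees-of-freedom argument: it selects integer coordinates for the vertices, lists examples of how $f$, $g$, $c$, $b$ can be perturbed while $h$ moves accordingly, and asserts that this flexibility lets one keep all coordinates rational; it never actually tracks bit-lengths, so the ``polynomial time on a Turing machine'' clause is essentially asserted rather than derived (the paper only remarks separately, outside the proof, that the coordinates should be polynomially bounded because Subset-Sum is weakly NP-hard). You instead classify every vertex as the output of one of three rational-closed operations --- free rational placement, intersection of two rational lines, and a single linear area equation fixing the spike's area at $v_i$ --- and then observe that a constant-depth chain of such operations (Cramer's rule doubling bit-length at each step) keeps all coordinates polynomial in the input size. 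This is the same underlying construction but formalized, and it is the only version of the argument that genuinely establishes the Turing-machine complexity claim. One point the paper's proof touches that you do not: after building each gadget in isolation, the final polygon requires translating and \emph{rotating} gadgets into a circular arrangement around $q$ and replacing the vertex $h$ by the $r$ vertex of a neighboring gadget; an arbitrary rotation can reintroduce irrational coordinates, and the paper waves at this by saying the arrangement can be perturbed back to rational positions. Since the lemma as stated concerns a single gadget, your omission is defensible, but if you intend your argument to support the full reduction you should add a sentence confirming that the assembly step can also be realized by your operations (i)--(ii), e.g.\ by using rational rotations (Pythagorean directions) rather than exact equal-angle rotations.
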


\begin{proof}

Consider one gadget like the one illustrated in Fig.~\ref{fig.2}. First, we select integer coordinates for all the vertices. Fig.~\ref{fig.2}(B) exhibits an example. We will see that the vertical and horizontal distances between the vertices of a gadget, and also between $q$ and these vertices are flexible. 

Since we consider the diffuse reflection type, the size of the mirror-edge is flexible, and $a$ can move to be close enough to $b$ by changing the size of the mirror-edge. And, these movements will not change the visibility of the mirror-edge on the other side of the window.

In particular, consider the following examples;
\begin{enumerate}
    \item If we move $f$ to a lower position with rational coordinates $h$ will move to its right side. 
    \item If we move $g$ to a lower position, $h$ will move to a lower position too.
    \item We can change the position of $c$ to the left or right, and $h$ will move accordingly.
    \item $b$ has a more flexible position. If we fix the position of $c$, by moving $b$ to the left or right we can change the size of the window. Also, The position of $a$, $r$, and $d$ vertices is highly flexible too.  
\end{enumerate}

Now suppose that we construct $n$ gadgets with rational coordinates independently. In the final structure of the polygon, we need to put these gadgets in different positions and connect them. We will see how to do this after Lemma~\ref{lem.1} in the rest of the paper. However, we may rotate or move a gadget to another place, and the coordinates of its vertices will change accordingly. By keeping the overall arrangement of a gadget, we can change the positions of its vertices to have rational coordinates. 

We can change the position of $b$, $c$ and $f$ to provide a constant condition for the area behind the window.

Here, we count on $h$ as a vertex of the gadget, but later in the final structure of the polygon, we will replace it with a $r$ vertex of another gadget. As the coordinates of all the vertices of all gadgets are rational, this change will keep all the coordinates rational.
\end{proof}

See Fig.~\ref{fig.3}, if any edge of the final polygon stands in front of the dark gray area (the dark gray triangle which has $h$, $b$, $c$, $d$, $q$ on its boundary), then it has the potential of making a part of the area behind the window mirror-visible to $q$. This part is unavoidable to be directly visible through an edge rather than the specified mirror-edge of the gadget. We denote this small mirror-visible area as $SMV$. In Fig.~\ref{fig.3} this region in shown in green.

The reason that the $SMV$ area is unavoidable to get mirror-visible by another edge rather than the predetermined mirror-edge is this: There must be at least one edge (say $\overline{rd}$) in front of the $SMV$ area (or in front of the dark gray area shown in Fig.~\ref{fig.3}) for the final constructed polygon to be a connected and simple polygon. 

For every gadget, we set one $\overline{rd}$ edge on the opposite side of the polygon that can make the \emph{entire} $SMV$ area mirror-visible to $q$.

\begin{figure}[htbp]
\begin{center}
\includegraphics[scale=1.3]{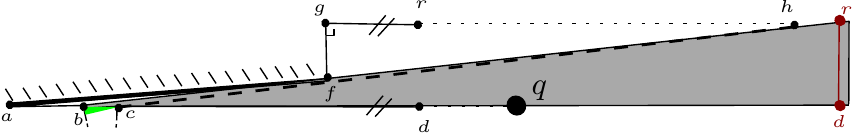}
\caption{The $SMV$ area is illustrated. Any edge of the main polygon in the dark gray area can see at least a part of the little green area.} 
\label{fig.3}
\end{center}
\end{figure}

In the worst case, base on the position of the $c$ and $f$ vertices, the surface area of this mirror-visible part is restricted to be the little $SMV$ area. This area is illustrated clearly in Fig.~\ref{fig.5}. The mirror-visibility of the $rd$ edge shown in Fig.~\ref{fig.5} covers the $SMV$ area completely.

\begin{figure}[htp]
\begin{center}
\includegraphics[scale=1.3]{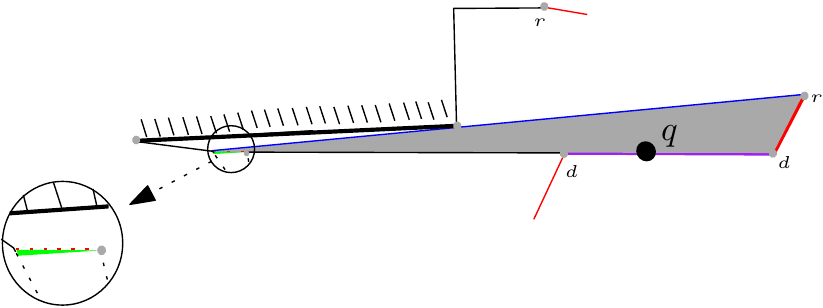}
\caption{The little $SMV$ region is shown in green. Every $rd$-edge can make an entire and only one $SMV$ region mirror-visible to $q$.} 
\label{fig.5}
\end{center}
\end{figure}

\begin{lemma}
The size of the surface of the $SMV$ area is flexible.
\label{lem.2}
\end{lemma}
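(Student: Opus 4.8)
The plan is to reuse the positional flexibility already established in the proof of Lemma~\ref{lem.1}, now read off at the level of the $SMV$ region instead of the whole value-area. The text observes that, in the worst case, the surface of the mirror-visible leak is pinned down by the positions of the $c$ and $f$ vertices (see Fig.~\ref{fig.5}). So my first step is to describe the $SMV$ region explicitly as the portion of the area behind the window $\seg{bc}$ that lies beyond the shadow cast by $c$ and $f$ --- equivalently, the little region bounded by the line through $q$ carrying $a,b,c,d$, by the edge $\seg{rd}$ that must stand in front of the dark gray triangle for the final polygon to remain simple and connected, and by the line through $c$ and $f$ (recall $h$, $f$, $c$ are collinear). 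Its area is then a function of the coordinates of $c$, $f$, $r$, and $d$ alone.

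Next I would isolate a single parameter that scales this area. By items (1)--(4) in the proof of Lemma~\ref{lem.1}, the vertices $c$, $f$ (and correspondingly $h$, $r$, $d$) can each be displaced while the window $\seg{bc}$, the main-mirror-edge $\seg{af}$, and $q$ are held fixed. Sliding $c$ (or lowering $f$) along its admissible range changes where the $c$--$f$ shadow line meets the region behind the window, which moves one side of $SMV$ continuously; because the $SMV$ boundary is piecewise linear in that displacement, its surface sweeps a whole interval of values. Choosing the displacement rationally, as Lemma~\ref{lem.1} guarantees we may, realizes any target value in that interval with rational coordinates --- which is exactly the asserted flexibility.

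Throughout, the invariants to preserve are the three defining gadget properties: $\seg{af}$ still makes the \emph{entire} value-area behind $\seg{bc}$ mirror-visible, no edge other than $\seg{af}$ reaches behind $\seg{bc}$ by a single diffuse reflection, and the single edge $\seg{rd}$ still captures the \emph{entire} $SMV$. In particular I must keep $r$ on the left of $h$ (the blocking condition noted just before Lemma~\ref{lem.1}), so that $c$ and $f$ continue to occlude $\seg{gr}$ and the visibility of the value-area is untouched by the deformation.

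The step I expect to be the main obstacle is decoupling: showing that the parameter controlling the $SMV$ surface is independent of the one fixing the value-area $v_i$, so that calibrating the leak never silently rescales the encoded value nor creates a second, uncaptured leak. This reduces to checking that the rays delimiting $SMV$ and those delimiting the value-area are governed by disjoint vertex clusters --- the $c,f,r,d$ group for the leak versus $\seg{af}$ together with $\seg{bc}$ for the value --- so that the two surfaces can be tuned separately. Once that independence is in hand, continuity together with the rationality of Lemma~\ref{lem.1} closes the argument.
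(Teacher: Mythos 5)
Your argument is correct and follows essentially the same route as the paper: the paper's proof likewise appeals to the vertex-position flexibility of Lemma~\ref{lem.1} (it moves $b$ between $a$ and $c$, lowers $f$, and adjusts the angle between $\seg{af}$ and $\seg{ab}$, where you instead slide $c$ and lower $f$) to continuously tune the $SMV$ surface while the gadget's visibility properties are preserved. Your added care about decoupling the $SMV$ area from the encoded value $v_i$ is not in the paper's proof of this lemma, but the paper addresses it implicitly elsewhere by letting the gray-quadrangular region's base and height be set independently (even extending outside the polygon), so it does not constitute a different approach.
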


\begin{proof}
As we mentioned earlier, the positions and sizes of vertices and segments are variable in many ways. In particular, considering a specific gadget, $b$ can move readily between $a$ and $c$, or $f$ can move to a lower position. Moreover, we can set the angle between $\overline{af}$ and $\overline{ab}$ accordingly. Therefore, we can set the surface of the $SMV$ area to cover any small predetermined units of area. 
\end{proof}

\subsubsection{The Two-Gadget Structure}

The dark gray region is between two segments; $\overline{bf}$ and $\overline{cd}$. To make sure no gadget disturbs another one, we need to know that no part of a gadget is in front of the region between $\overline{bf}$ and $\overline{cd}$ of another gadget.

Suppose that the number of the values of $InSS$ is even. We place every two gadgets head-on to each other in a way that $a$, $b$, $c$, $d$ vertices of each gadget, and $q$, lie in one line. Also, in every such construction of two gadgets, one gadget is upside-down in compare with the other one (see Fig.~\ref{fig.6}). We call this structure a two-gadget structure. Based on this structure, no dark gray region of any two gadget faces any edge of another one internally (from the inside of the gadget). 
This arrangement (the two-gadget construction) helps us to use $rd$-edges practically. We will discuss these edges later.

\begin{figure}[htbp]
\begin{center}
\includegraphics[scale=1.3]{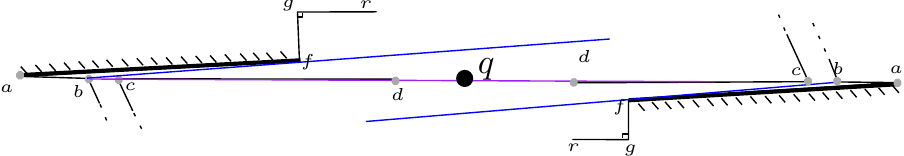}
\caption{We place every two gadgets head-on to each other in a way that the half-lines containing $\overline{dc}$ segments lie on one line.} 
\label{fig.6}
\end{center}
\end{figure}

\subsubsection{Assembling the Two-Gadget Structures}

All the two-gadget structures will be assembled in the way as Fig.~\ref{fig.7}(A) illustrates. The two half-lines containing the $\overline{bf}$ segments of every two-gadget should cross $r$ vertices of two other gadgets in the top and bottom of a two-gadget. 

Thus, no edge of any gadget can place in the dark gray area of other gadgets. Now, we add reliably all these two-gadget patterns one by one to the polygon. The gadgets will be around $q$ circularly. 

Note that, we can set various distances between the gadgets of the polygon by applying appropriate length to the segments $\overline{gr}$ and $\overline{fg}$. And also, the angle between each mirror-edge and its corresponding $\overline{ab}$ is adjustable.

\begin{figure}[htb]
\begin{center}
\includegraphics[width=\textwidth]{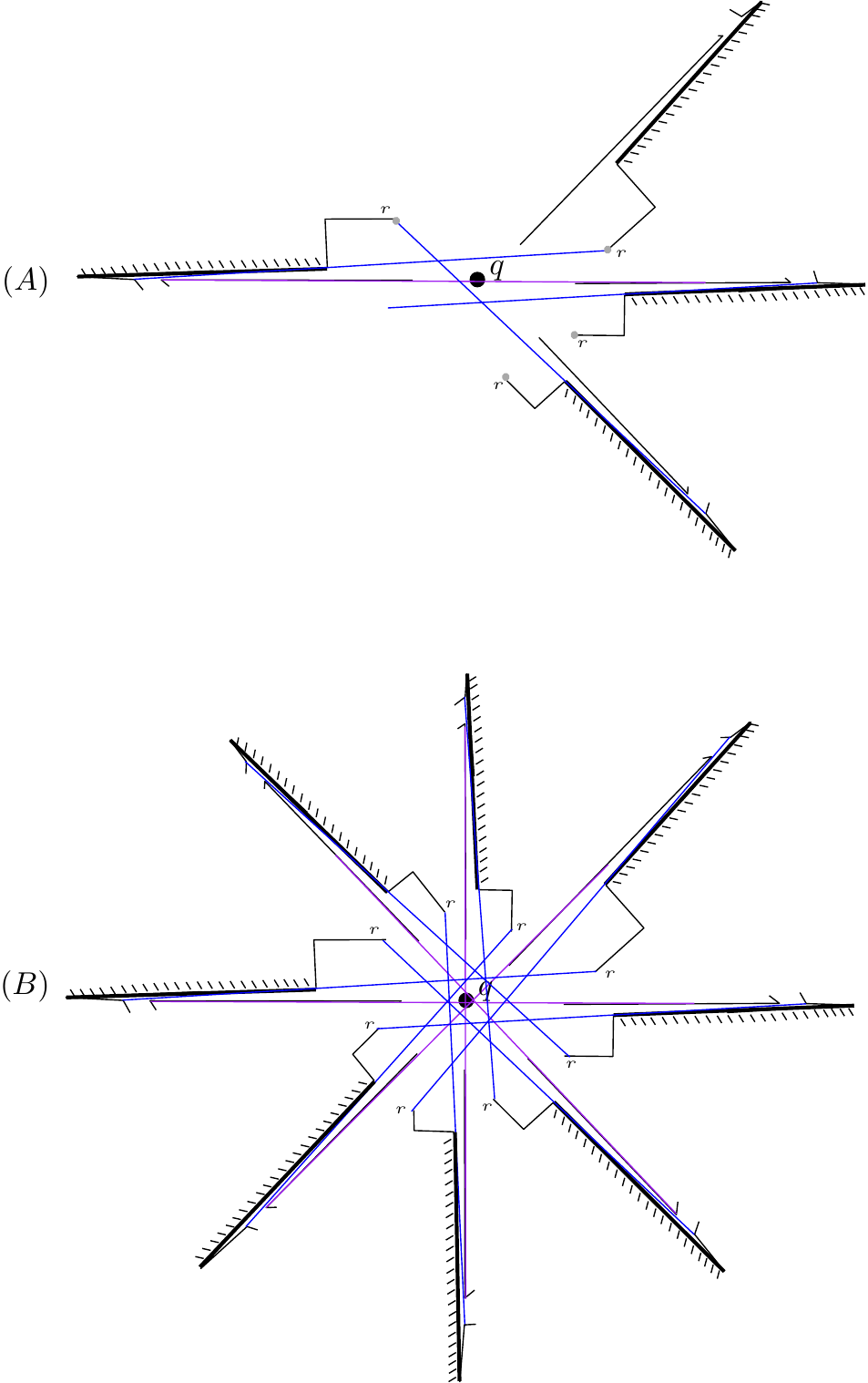}
\caption{(A) Shows how to attach gadgets to each other, and (B) Illustrates 8 gadgets. As we can see, the polygon is not completely constructed yet.} 
\label{fig.7}
\end{center}
\end{figure}

The segments $\overline{fg}$, $\overline{cd}$ and $\overline{gr}$ of a gadget are not in the dark gray region of another one. Therefore, these segments cannot disturb the functionality of the gadgets. In other words, in the constructed polygon up to this step, no edge except for the specified mirror-edge in each gadget can make any part of the region behind a window mirror-visible to $q$.

\subsubsection($\seg{rd}$-edges)
Fig.~\ref{fig.7}(B) demonstrates assembling of 8 gadgets in a polygonal space. The reduction polygon is not completely constructed. The gadgets are not connected. And, this is not a simple polygon yet. Any edge that fixes an unconstructed part can make one specific $SMV$ area on the opposite side completely mirror-visible to $q$. That is because any such edge will cover the entire dark gray area from the front side (the same as the $\seg{rd}$ edge illustrated in Fig.~\ref{fig.5}). So, the $SMV$ area will be covered entirely.

As Fig.~\ref{fig.8}, shows, we fill the unconstructed regions of the polygon by $\seg{rd}$-edges. Each $rd$-edge makes exclusively an entire $SMV$ area of another gadget mirror-visible to the viewer. This $SMV$ area corresponds to a pattern head-on the $\seg{rd}$-edge across $q$.

\begin{figure}[htp]
\begin{center}
\includegraphics[width=\textwidth]{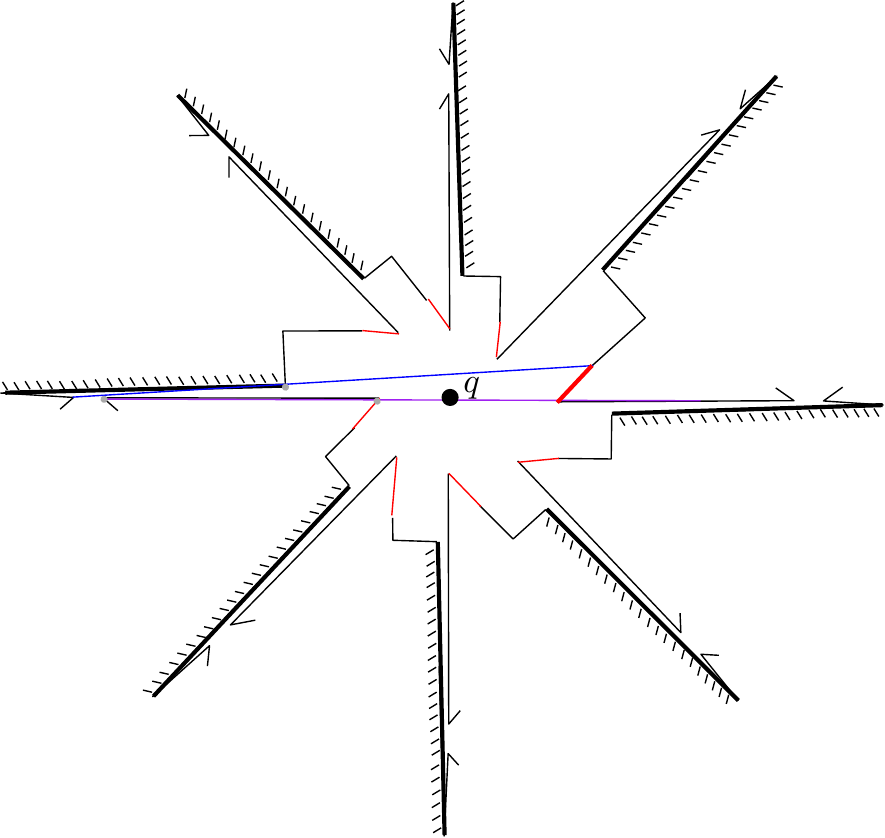}
\caption{The $rd$-edges connect the gadgets to each other. These edges can see a particular part of the region behind the windows (the little $SMV$ area).}
\label{fig.8}
\end{center}
\end{figure}

\subsubsection{The Area of the $SMV$ Regions}

The construction of the polygon and the gadgets should be in a way that, the summation of the surface area of all the $SMV$ regions gets equal to a value lower than 1 unit of area. This is practically feasible because we know the number of values in our given Subset-Sum instance is $n$, and also because the positions of $b$ and $c$ vertices of the gadgets are variable. So, we can adjust the surface area of every $SMV$ region. Assume that the summation of the surface area of all the $SMV$ regions is equal to $\epsilon$ (lower than 1 unit).

\subsubsection{Quadrangular Regions}

In the next step, we create $n$ quadrangular regions; each one has a surface area equal to one of the values of the given instance of the Subset-Sum problem ($InSS$). There might be not enough space in front of each window in the gadgets to put any quadrangular region with any shape. However, we already know that the type of reflections is diffuse, and the reflected rays from the mirror-edge are directed in all directions. So, behind a window, we have a wide range of space in different angles that can provide enough space for the quadrangle to be expanded to the outside of the polygon (see Fig.~\ref{fig.9}). In fact, this is the reason why we cannot use this reduction if we consider the specular type of reflections. 

To have a more precise discussion on the quadrangular region see the following:

A quadrangular region with a predetermined surface area can be placed behind $\overline{bc}$ (the window) of any gadget in a way that the whole invisible region of the quadrangular region gets visible to its corresponding mirror-edge. Behind a window, there is a quadrangular region whose surface area equals to a value of $InSS$ plus a $\epsilon$. This $\epsilon$ equals to the area of the corresponding $SMV$ area behind that window.
See Fig.~\ref{fig.9}(A). In this figure for the quadrangular regions to be clearer, we illustrate the $SMV$ areas in green. The gray-quadrangular regions are the quadrangular regions that their surface area equal to the values of $InSS$.

The $SMV$ area is a triangle with three vertices; $b$, $c$, and $x$. The position of $x$ is adjustable, but it provides different surfaces for the corresponding $SMV$ area. See Fig.~\ref{fig.9}(B). 

The gray-quadrangular area starts from $\seg{cx}$. Two other vertices on the other side can provide a quadrangle. The surface area of this gray-quadrangle is adjustable and flexible. The position of the four vertices of this quadrangle can be set to provide any surface area for it. Also, this flexibility ensures us that the coordinate of the vertices can be set to be rational.

\begin{figure}[htbp]
\begin{center}
\includegraphics[scale=1.3]{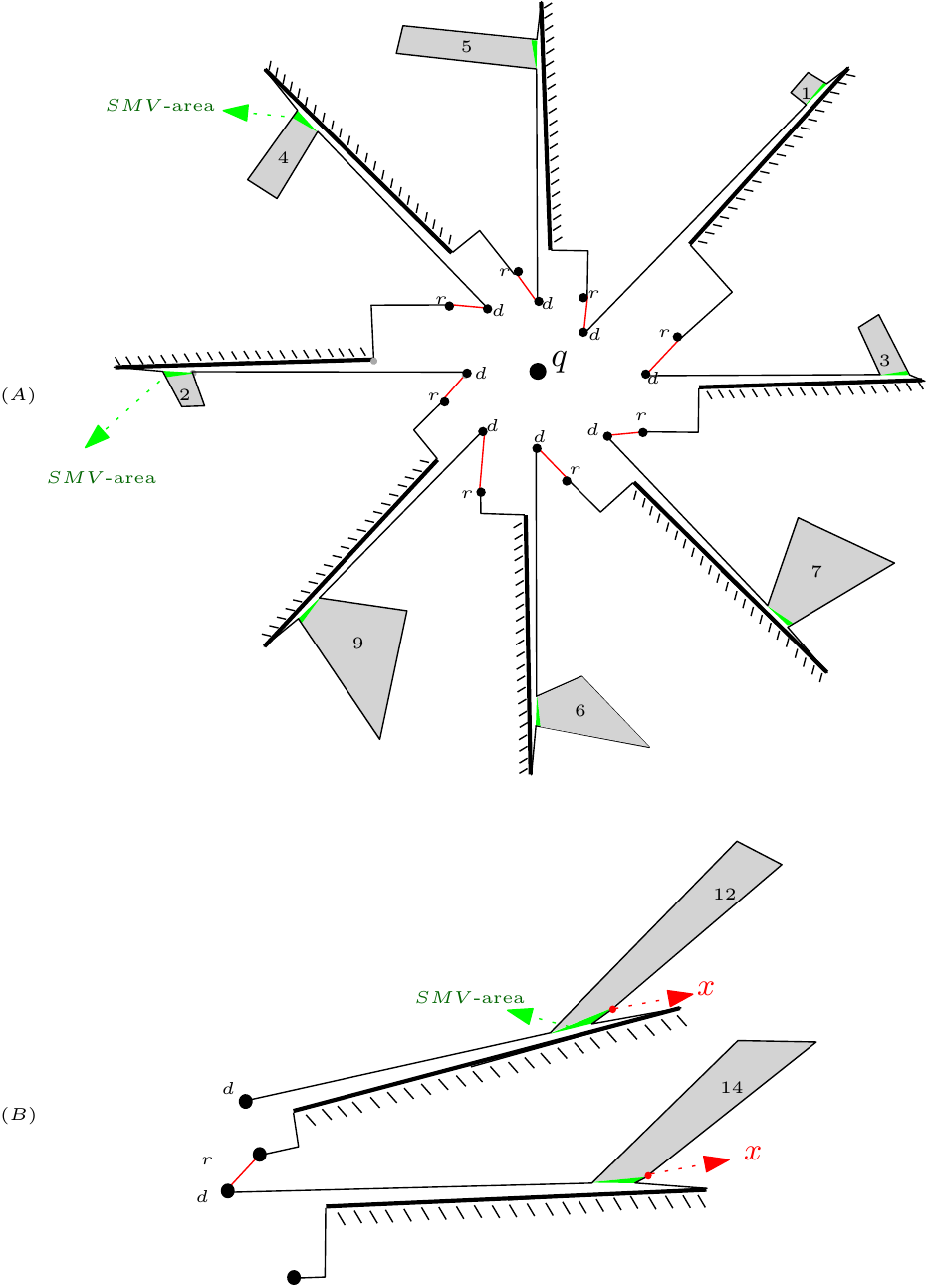}
\caption{(A) At the end, we add the gray-quadrangular regions to the polygon. (B)The gray-quadrangular region may expand to the outside of the polygon to provide any surface area which is required in a value of $InSS$. } 
\label{fig.9}
\end{center}
\end{figure}

\subsubsection{The $\epsilon$ Additional Space}

We set the target of our problem to $T+\epsilon$, which $T$ is the target value of $InSS$. And, $\epsilon$ is the summation of the surface areas of all the $SMV$ regions in all gadgets. So, We need all the $SMV$ areas to be mirror-visible.

Every $SMV$ area can be visible for $q$ through a $\seg{rd}$-edge or a mirror-edge ($\overline{af}$). When we select a mirror-edge to make visible the area behind its corresponding window, it is not required to choose the corresponding $\seg{rd}$-edge. However, in such a case that a mirror-edge is not selected, the corresponding $\seg{rd}$-edge has to be chosen so that the whole $\epsilon$ surface gets covered.

\subsubsection{Odd Number of Values}

If the given Subset-Sum instance ($InSS$) has an odd number of values, then we can easily add one artificial gadget to our structure. First, we generate a two-gadget structure. Then, we can put an extra edge between $r$ and $d$ vertices of one of its gadgets, and eliminate that one.

Note that according to our previous discussions all vertices have rational coordinates so eliminating a gadget in this way does not affect our reduction. 

Since this edge is not in the dark gray area of any other gadget (the area between the two lines that include $\seg{bf}$ and $\seg{cd}$ segments in a gadget), its mirror-visibility cannot disturb anything in our final constructed polygon. And, the reduction will work fine.


\subsubsection{Reduction Analysis}

There are $n$ values in the given instance of the Subset-Sum problem ($InSS$)
 
Each $SMV$ area is a triangle. So, we can compute the complete surface area of all the $SMV$ areas in $O(n)$ time.  Also, the surface area of each gray-quadrangular region (the area behind a window minus the $SMV$ area) can be set to be equivalent to a value in $InSS$. To do this, we use a quadrangular shape and set its base and height appropriately as we mentioned previously. We can extend the gray-quadrangle to the outside of the polygon.

We call the mirror-edges ($\overline{bc}$) used in the above reduction, as the main-mirror-edges. To expand $VP(q)$ by exactly $k$ units of area, any solution needs to contain a subset of these main-mirror-edges, and this subset exists if and only if there exists an equivalent subset in $InSS$. Thus, we can reduce this problem to the Subset-Sum problem in polynomial time.

The reduction also works if we look for the minimum number of mirror-edges. In that case, the useless $\seg{rd}$-edges and other unnecessary edges should not be chosen to be converted to mirrors. However, those main-mirror-edges corresponding to the solution of $InSS$ are required to be selected to be converted into mirrors. Based on the solution, if in a gadget the main mirror-edge is not selected the $\seg{rd}$- edge on the opposite side should be converted to mirror so that all the $SMV$ area become mirror-visible to $q$.  Note that as we look for the optimal solution, we need to select the best subset of the main mirror-edges. So, the optimal version of the problem is NP-hard.

\subsubsection{Multiple Reflections}

As Fig.~\ref{fig.9}(B) illustrates, every $SMV$ area has an edge ($\seg{cx}$) which is not directly visible to $q$. Consider a gadget, we call the $\seg{cx}$ segment the ``second-mirror-edge". We can easily construct a gray-quadrangle region so that it becomes entirely mirror-visible to both the main mirror-edge and the second-mirror-edge. That is because we consider the diffuse type of reflections. We only need to make sure that gray-quadrangular region is in front of or at least alongside the second-mirror-edge, and it is not behind the second-mirror-edge. That is because the second-mirror-edge cannot see its behind.

Consider two reflections, $q$ can see a gray-quadrangular region through one $\seg{rd}$-edge and its corresponding second-mirror-edge on the opposite side gadget. Therefore, considering more than one reflections, $q$ can see a gray-quadrangular region either by a main-mirror-edge, or a second-mirror-edge. However, if the main -mirror-edge is not selected, for the second-mirror-edge to work we need to convert a $\seg{rd}$-edge on the opposite side of that gadget too.

Now, we can use a reduction similar to the one used in the single reflection case, except that here we need to count on both the main-mirror-edge and the second-mirror-edge. We see if either one of these mirror-edges is selected, then we choose the corresponding value of $InSS$.

\subsubsection{Optimal Version of Multiple Reflections}

If we look for the minimum number of mirror-edges, the previous reduction works. That is because
the main-mirror-edge costs only one edge to be converted into a mirror, but the second-mirror-edge requires two mirror-edges in the middle.

 Here again, if a $SMV$ area is covered by a main-mirror-edge, there is no need to convert the corresponding $\seg{rd}$-edge to mirror.

No matter how many multiple reflections are allowed, $q$ cannot see any invisible gray-quadrangular region except by using either the main mirror-edge or the second-mirror-edge as reflectors. And, if we select either one of these mirror-edges, the entire corresponding invisible gray-quadrangular region gets mirror-visible to $q$ (remember that for the second-mirror-edge to works individually the $\seg{rd}$-edge) needs to be converted to mirror too).

\section{Discussion}

In this paper we have obtained complexity results for the problem of adding exactly k units of area to the visibility polygon of a point within a single polygon. Diffuse and Specular type of reflections as well as single or multiple reflections have been carefully considered and addressed. These are the very first result of its kind on this problem to the best of authors knowledge. We intend to investigate the problem of adding at least k units of area to the visibility polygon of a point within a simple polygon in further steps. Also, proposing an approximation algorithm for any version of the problem would be priceless.

The specular version considering multiple reflections is still open.
\linespread{1} 

\end{document}